\theoremstyle{plain}
\newtheorem{theorem}{Theorem}[section]  \newtheorem{lemma}{Lemma}[section]
\newtheorem{assumption}{Assumption}
\newtheorem{example}{Example}
\let\emptyset\varnothing
\numberwithin{equation}{section}
\title{Inference on Partially Identified Parameters with Separable Nuisance Parameters: a Two-Stage Method
\thanks{I am grateful to Hiroaki Kaido, Ivan Fernandez-Val, and Jean-Jacques Forneron for their guidance and support. I also thank participants of Boston University econometrics workshop for their helpful comments.}
}
\author{Xunkang Tian\footnote{European Research University. Email: \href{mailto:xunkang.tian@eruni.org}{xunkang.tian@eruni.org}}}
\date{\today}
\begin{document}

\maketitle

\begin{abstract}
This paper develops a two-stage method for inference on partially identified parameters in moment inequality models with separable nuisance parameters. In the first stage, the nuisance parameters are estimated separately, and in the second stage, the identified set for the parameters of interest is constructed using a refined chi-squared test with variance correction that accounts for the first-stage estimation error. We establish the asymptotic validity of the proposed method under mild conditions and characterize its finite-sample properties. The method is broadly applicable to models where direct elimination of nuisance parameters is difficult or introduces conservativeness. Its practical performance is illustrated through an application: structural estimation of entry and exit costs in the U.S. vehicle market based on \cite{wollmann2018trucks}.
\end{abstract}

\begin{center}
{\small \textbf{Keywords:} Moment Inequalities, Nuisance Parameters, Partial Identification}
\end{center}

\clearpage

\section{Introduction}

In a moment inequalities system, researchers often encounter parameters of interest that are not point identified. The literature has developed a variety of methods to construct confidence intervals for these partially identified parameters. For instance, \cite{chernozhukov2007parameter} proposes a criterion function and characterizes the identified set using the set of minimizers. \cite{rosen2008confidence} constructs a quadratic-form test statistic, deriving its asymptotic behavior, which follows a chi-square distribution. After determining the critical value for the parameters of interest, the confidence set can be obtained through test inversion. \cite{andrews2010inference} introduce their test and confidence set, building upon the generalized moment selection method.
Further contributions to the literature include the work of \cite{canay2010inference}, which focuses on empirical likelihood inference for partially identified models and establishes the validity of the bootstrap for EL-based test statistics. Additionally, \cite{romano2010inference} develops a general framework for conducting inference on the identified set in partially identified econometric models that is uniformly valid across a broad class of models, including those with moment inequalities. These studies significantly expand the toolbox of techniques available for constructing confidence sets in partially identified moment inequality models, providing researchers with a diverse array of approaches to tackle complex estimation problems.

Researchers may sometimes only be interested in specific parameters within the moment inequalities system, focusing on "subvector inference." In these cases, handling nuisance parameters and eliminating unnecessary parameters can be a challenging task. Traditional methods, such as subvector projection and subsampling inference, often exhibit conservatism and limited asymptotic power. Fortunately, recent literature has introduced several progressive improvements in this area.
\cite{bugni2017inference} introduces a bootstrap-based inference method specifically designed for subvector inference, which addresses some of the shortcomings of traditional approaches. Similarly, \cite{chen2018monte} proposes an algorithm that constructs confidence sets for subvectors based on Monte Carlo simulations, providing a more accurate estimation of the parameter space. Additionally, \cite{kaido2019confidence} presents a bootstrap-based calibrated projection procedure for constructing confidence sets, tailored for a single component or a smooth function of the parameter vector.
These methods generally control asymptotic size uniformly across a broad class of data generating processes, offering more reliable and effective techniques for subvector inference in moment inequalities systems compared to their traditional counterparts.

In the literature, there are studies that focus on inference for partially identified parameters in specific structures, which are frequently encountered in empirical applications. While these studies consider a limited class of data generating processes, they offer improved efficiency and better asymptotic convergence rates compared to more general methods.
\cite{andrews2019inference} examines moment inequalities that are linear with respect to the nuisance parameters. The authors develop a least favorable test that addresses the nuisance parameter and reduces conservativeness. 
\cite{cox2023simple} considers a linear moment inequalities system and eliminates the nuisance parameter using a linear algebra method developed by \cite{kohler1967projections}, combined with vertex enumeration. Subsequently, they develop the refined chi-squared (RCC) test to estimate the parameters of interest. 
By concentrating on specific structures, these studies offer valuable insights into the inference of partially identified parameters in commonly encountered situations. These tailored approaches can lead to computational savings, making them particularly useful for empirical applications.


In this paper, we address a specific structure in which the nuisance parameters are separable from the parameters of interest. In this structure, nuisance parameters are contained within a single function that is unrelated to the parameters of interest, and the moment inequalities system is linear with respect to this function. This formulation goes beyond the linearity restriction of nuisance parameters found in some previous literature, allowing for a wider range of application scenarios.

Under certain assumptions and circumstances, separable nuisance parameters can be eliminated through direct elimination, as demonstrated by \cite{cox2023simple}. However, this approach requires considerable computational effort, particularly as the complexity of the vertex enumeration process grows rapidly with the dimensions of moment inequalities. Furthermore, this method introduces conservativeness into the final results.

To overcome these limitations, we develop a two-stage method for constructing a confidence set for the parameters of interest. The first stage involves estimating the nuisance parameters. This stage is compatible with a wide range of methods, such as Generalized Method of Moments (GMM) or Maximum Likelihood, provided that the first-stage estimator satisfies a specific assumption detailed later in the paper.

In the second stage, we base our method on the refined chi-squared  test developed by \cite{cox2023simple}, incorporating necessary adjustments. We construct a test statistic and derive its asymptotic properties, enabling the calculation of a confidence region for the parameters of interest through test inversion. Although our method requires an additional stage of estimation for the nuisance parameters, it produces a confidence set with the correct asymptotic rate of coverage and avoids the conservativeness issue associated with the direct elimination of nuisance parameters. 
This approach, therefore, offers a reliable alternative for estimating partially identified parameters in moment inequalities systems with separable nuisance parameters, at the cost of assuming the nuisance parameters can be estimated separately.

The rest of the paper is organized as follows: Section \ref{ch:method} sets up the moment inequalities and describes the method of estimation, Section \ref{ch:asymptotics} discusses the asymptotic and finite-sample properties of the approach, 
Section \ref{sec:empirical} presents an empirical application with U.S. vehicle market,
and Section \ref{ch:conclusion} concludes this paper.

\section{Set-up and Inference Methods} \label{ch:method}

\subsection{Set-up and motivating examples}\label{exampleref1}

In this section, we describe the method of inference. Let $W \in \mathbb{R}^{d_W}$ be an observable random vector with distribution $F$. 
We have $n$ observations of $W$, denoted as $\{ W_i \}_{i=1}^n$. Let $\theta \in \Theta \subseteq \mathbb{R}^{d_\theta}$ and $\delta \in \Delta$ be unknown parameter vectors, where $\Delta$ is a subset of a vector space which can be finite or infinite dimensional. Suppose $\theta$ is our parameter of interest, and $\delta$ is a nuisance parameter.


Let $M : \mathbb{R}^{d_W} \times \Theta \to \mathbb{R}^{d_M}$ and $N : \mathbb{R}^{d_W} \times \Delta \to \mathbb{R}^{d_N}$ be measurable functions. Let $B : \Theta \to \mathbb{R}^{k \times d_M}$, $C : \Theta \to \mathbb{R}^{k \times d_N}$, and $\rho : \Theta \to \mathbb{R}^{k \times 1}$ be matrix-valued measurable functions, respectively. Consider the following moment inequality restrictions:
\begin{equation}\label{a}
\mathbb E_F[ B(\theta) M(W, \theta) - C(\theta) N(W, \theta, \delta) ] \leq \rho(\theta).
\end{equation}
Note that Equation \eqref{a} can represent any general form of moment inequalities, expressed as 
\[
E_F[m(W, \theta, \delta)] \geq 0,
\] 
by setting the following components: \( B(\theta) = 0 \), \( M(W, \theta) = 0 \), \( C(\theta) = 1 \), \( N(W, \theta, \delta) = m(W, \theta, \delta) \), and \( \rho(\theta) = 0 \). We adopt the specific structure in Equation \eqref{a} to achieve a clear separation between the nuisance parameter and the moment inequalities. In this form, the nuisance parameter \( \delta \) only appears through \( N(W, \theta, \delta) \), ensuring that \( N \) remains non-divisible. This separation is aimed at simplifying the computational procedures in subsequent sections. 
Therefore, we still use "separable" to describe our moment inequalities even though the separation between parameters of interests and nuisance parameters is in fact not required. 
We bring two economic examples matching with this moment inequalities structure. 


\newcounter{example1}
\setcounter{example1}{\value{example}}

\begin{example}[US Vehicle Market]\label{example1}
The first example is from \cite{wollmann2018trucks}, 
which builds a structural model of US commercial vehicle markets. 
Suppose the utility function for consumer $i$ purchasing vehicle type $j$ is $U(p_j,x_j,\xi_j,\varepsilon_{ij};\delta)$, 
where $x_j$ is a $d_X$-vector of observable product characteristics, $p_j$ is the price of product $j$, $\xi_j$ represents unobservable product characteristics, and $F$ is their joint distribution. 
$\varepsilon_{ij}$ is an idiosyncratic error term that follows an independent Type I Extreme Value distribution. 
$\delta$ is some parameters to be estimated, which can be finite- or infinite-dimensional. 
Consumer $i$ chooses to purchase vehicle with type $j$ if and only if 
\begin{equation*}
U(p_j,x_j,\xi_j,\varepsilon_{ij};\delta)
\geq
U(p_{j'},x_{j'},\xi_{j'},\varepsilon_{ij'};\delta)
 \quad \text{ for }  j'=1,...,J
\end{equation*}

Therefore, the market share of vehicle with type $j$ is given by
\begin{equation*}
  s_j(p,x,\xi;\delta)=\int_{\{\varepsilon: 
  U(p_j,x_j,\xi_j,\varepsilon_{ij};\delta)
\geq
U(p_{j'},x_{j'},\xi_{j'},\varepsilon_{ij'};\delta), \
    j'=1,...,J\}} \text{d} \varepsilon
\end{equation*}

The profit function of a firm $f$, which provides a set of products $J_f$, can be written as
\begin{equation}\label{example1:profit}
  \pi=\sum_{j\in J_f}(p_j-mc_j) M s_j(p,x,\xi;\delta)
\end{equation}
where $M$ represents the size of the market, and $mc_j$ denotes the marginal cost of producing vehicle $j$. 
In notation, write
\begin{equation*}
  \Delta \pi (J_1,J_2, \delta )=\pi (J_1,\delta)  -  \pi (J_2,\delta)
\end{equation*}
as the difference in profit for firm $f$ when offering the product set $J_1$ instead of $J_2$. This captures the net profitability of adding or removing a product type.

Moreover, the sunk cost of supplying vehicle type $j$ is given by $x'_j\eta$, where $\eta$ represents the coefficient for sunk costs. It measures the fixed costs associated with entering or maintaining a product in the market, including investments in production capacity, marketing, and regulatory compliance. Higher values of $x'_j\eta$ indicate that firms face substantial financial barriers when introducing new products. 

Additionally, $\lambda$ is the coefficient that accounts for the salvageable portion of sunk costs. It describes the proportion of the initial investment that can be recovered when a firm discontinues a product type, such as through the resale of equipment or the reallocation of resources. A higher $\lambda$ reduces the effective financial burden of discontinuing a product.



Therefore, the moment inequalities can be constructed based on firm's decision-making process regarding product entry and exit. Let $J_{f,t}$ denote the product set firm $f$ offers at time $t$. 
If at time $t$, firm $f$ introduces product $j$, that is, $j\notin J_{f,t-1}$ and $j\in J_{f,t}$, an entry condition can be written as 
\begin{equation}\label{ex1:moment1}
    \mathbb E_F\left[ \Delta \pi (J_{f,t},J_{f,t}\backslash j, {\delta})-x'_j\eta  \right] \geq  0
\end{equation}
which states the net gain from introducing a product in the market must be at least as large as the sunk costs associated with maintaining it.
Similarly, if at time $t$, firm $f$ no longer provides product $j$, that is, $j\in J_{f,t-1}$ and $j\notin J_{f,t}$, an exit condition can be written as 
\begin{equation}\label{ex1:moment2}
    \mathbb E_F\left[ \Delta \pi (J_{f,t},J_{f,t}\cup j, {\delta})+\lambda x'_j\eta \right] \geq 0
\end{equation}
which imposes that the net loss from obsoleting a product must be compensated by the salvaged sunk costs. These moment inequalities reflect firms' optimal entry and exit strategies in response to profitability conditions and sunk cost constraints.

In this setting, $\lambda$ and $\eta$ are the parameters of interests, and $\delta$ is the nuisance parameter. 
Write $\theta=(\lambda,\eta)$ for convenience. Let $r_1$ be the total number of entry constraint, and $r_2$ be the total number of exit constraint, thus we know $k=r_1+r_2$. Write $J$ as the total number of models.
This moment inequalities fit in the restrictions in Equation \eqref{a}, where 
$B(\theta)$ is a $k \times d_XJ$ matrix, each row of which corresponds to an inequality from Equation \eqref{ex1:moment1} or Equation \eqref{ex1:moment2}, where its $d_X(j-1)+1$ to $d_Xj$ columns are displayed as 
\begin{equation*}
    B(\theta)=
\left(
\begin{array}{ccccccc}
     &  & \vdots  & \vdots & \vdots & &  \\
    0 & \cdots & 0 & \eta' \text{ or } -\lambda\eta' & 0 & \cdots & 0 \\
     &  & \vdots & \vdots & \vdots & &  
\end{array}
\right)
\end{equation*}
"or" in the expression marks if the constraint is from the entry or exit condition. 
$M(W,\theta)$ is a $d_XJ$-vector, written as 
\begin{equation*}
    M(W,\theta) = \left(
    \begin{array}{c}
          x_1 \\
          \vdots \\
          x_J \\
    \end{array}  \right)
\end{equation*}
$C(\theta)$ is a $k\times k$ identity matrix $I_k$, 
and $N(W,\delta)$ is a $k$-vector, where 
\begin{equation*}
    N(W,\delta)=\left[ 
    \begin{array}{c}
         \vdots \\
         \Delta \pi (J_{f,t},J_{f,t}\backslash j, {\delta})  \text{ or }
         \Delta \pi (J_{f,t},J_{f,t}\cup j, {\delta})  \\
         \vdots
    \end{array}
    \right]
\end{equation*}
In this example, $\rho(\theta)$ is $0$.
\end{example}

\begin{example}[Patients Referral]


\cite{ho2014hospital}
considers a model of patients referral. When a patient is referred to some higher-level hospital, the
hospital choice is a result of a complex decision process with the patient and the doctor.
Specifically, it is affected by the price that the insurance company is expected to pay for the patient.
It is also affected by the severity of the patient's condition,
and the distance between patient's and hospital's locations.

They assume this process generates an ordering of the hospitals derived from the patient's and doctor's preferences and assessments.
Let individual $i$ stands for a representative patient. Suppose all insurance companies in the market are indexed as $\Pi=\{ 1,2,...,\bar{\pi} \}$, and $\pi_i \in \Pi$ is the one that patient $i$ is enrolled in.
There is a hospital referral function $W_i(c_i,\pi_i,s_i,h,\theta,\delta)$ as the rating for each hospital $h$ in the insurer's network, whose maximum across $h$ determines the hospital that patient is allocated to. Moreover, it is assumed to take the additively separable form:

\begin{equation*}
  W_i(c_i,\pi_i,s_i,h,\theta,\delta) = \theta_{p,\pi_i} p(c_i,\pi_i,h)+g(q(h),s_i,\delta)+\theta_d d_i(h),
\end{equation*}
where
\begin{itemize}
  \item $p(c_i,\pi_i,h)$ denotes the price the insurance company is expected to pay to hospital $h$ for patient $i$, where $c_i$ stands for the insurance plan that the patient is enrolled in, since one insurance company may offer various plans to costumers.
  \item $s_i$ measures the patient $i$'s severity level.
  \item $d_i(h)$ measures the distance between patient $i$ and hospital $h$.
  \item $\theta_{p,\pi_i}, \ \pi_i=1,2,...,\bar{\pi} $ and $\theta_d$ are the coefficients of the price and the distance respectively, and they are the parameters of interest.
  \item $q(h)$ is a vector of perceived qualities of given hospital $h$, whose dimension is the same as the number of all possible severity levels, and one element for each different severity;
  \item $g(q(h),s_i,\delta)$  measures the impact of the quality of a hospital $h$ for the given severity $s_i$, and $\delta$ is an infinite-dimensional parameters determining the shape of the $g$.
\end{itemize}

Moreover, they assume the validity of revealed preference axiom, which states that the chosen hospital is preferred to all feasible alternatives.
They consider all possible pairs of patients under the same insurance company and severity level but choose different hospitals, while both of whose choices were feasible for both agents.
Within each pair they apply the fact that each patient's choice is preferred to the choice made for the other.

Let the choice of the first patient $i$ be hospital $h$, and the choice of the second patient $i'$ be $h'$.
Define $\Delta W_i(h,h')$ to be the net change of referral function if patient $i$ choose hospital $h$ instead of hospital $h'$, that is
\begin{eqnarray*}
  \Delta W_i(h,h') &=& W_i(c_i,\pi_i,s_i,h,\theta,\delta)- W_i(c_i,\pi_i,s_i,h',\theta,\delta) \\
   &=& \theta_{p,\pi_i} p(c_i,\pi_i,h)    +g(q(h),s_i,\delta)+\theta_d d_i(h)   \\
   & & - \theta_{p,\pi_i} p(c_i,\pi_i,h') -g(q(h'),s_i,\delta)-\theta_d d_i(h'),
\end{eqnarray*}
and the revealed preference theorem implies
\begin{equation}\label{rpt}
   \mathbb E_F\left[ \Delta W_i(h,h') \right] \geq 0.
\end{equation}

To eliminate the nuisance parameter $\delta$, \cite{ho2014hospital} considers the same reasoning of revealed preference theorem for another patient $i'$, which generates
\begin{eqnarray*}
  \Delta W_{i'}(h',h) &=& W_{i'}(c_{i'},\pi_{i'},s_{i'},h',\theta,\delta)- W_{i'}(c_{i'},\pi_{i'},s_{i'},h,\theta,\delta) \\
   &=& \theta_{p,\pi_{i'}} p(c_{i'},\pi_{i'},h')    +g(q(h'),s_{i'},\delta)+\theta_d d_{i'}(h')   \\
   & & - \theta_{p,\pi_{i'}} p(c_{i'},\pi_{i'},h) -g(q(h),s_{i'},\delta)-\theta_d d_{i'}(h)
\end{eqnarray*}
By selecting $i'$ satisfying $\pi_i=\pi_{i'}$ and $s_i=s_{i'}$, 
the summation of these two inequalities cancels out function $g$ as well as the nuisance parameter $\delta$. This process gives
\begin{eqnarray*}
\Delta W_i(h,h')+\Delta W_{i'}(h',h) &=& \theta_{p,\pi_i}
                                    [ p(c_i,\pi_i,h)-p(c_i,\pi_i,h')+p(c_{i'},\pi_{i},h') -p(c_{i'},\pi_{i},h) ]  \\
 & & +\theta_d [ d_i(h)-d_i(h')+d_{i'}(h')-d_{i'}(h) ]  \\
   & \geq & 0
\end{eqnarray*}

However, this procedure has a shortcoming that it only makes use of the paired observations. To constrain the dataset within pairs of patients under the same insurance company and the same severity level will drop a lot of observations, which lead to the loss of available information and may potentially generate a selection bias problem. 
Our proposed method avoids this problem by looking into the moment inequalities system implied by equation (\ref{rpt}) directly.

In this moment inequalities, $\theta_{p,\pi_i}, \ \pi_i=1,2,...,\bar{\pi} $ and $\theta_d$ are parameters of interests, and $\delta$ is the nuisance parameter.  $B(\theta)$ is the vector of $\theta_d$ and $\theta_{p,\pi_i}, \ \pi_i=1,2,...,\bar{\pi} $. $M(W,\theta)$ is composed of $p(c_i,\pi_i,h)$, $p(c_i,\pi_i,h')$, $d_i(h)$, and $d_i(h')$. Moreover,  $C(\theta)$ is just $1$, and $N(W,\theta)$ is composed of $g(q(h),s_i,\delta)$ and $g(q(h'),s_i,\delta)$. 
In this example, $\rho(\theta)$ is $0$.

\end{example}

\subsection{The method of Cox and Shi (2023)}

\cite{cox2023simple} consider a similar structure, where $N(W, \delta)$ does not depend on $W$, i.e., $N(W, \delta) = \delta$. Let $d_{\delta}$ be the dimension of $\delta$. They propose a procedure to eliminate the nuisance $\delta$ by a matrix transformation following Theorem 4.2 of \cite{kohler1967projections}. We briefly review their setup and key steps in their approach. 
For convenience, the notation we use in this section is following their paper and not to be confused with other part of this paper.


Their moment inequality model has a form
\begin{equation*}
   \mathbb E_F\left[ B\bar{M}_n(\theta)-C\delta \right] \leq d
\end{equation*}
where $B$, $C$, and $d$ are $k\times d_M$, $k\times d_{\delta}$, and $k\times 1$ matrices, $\delta$ is an unknown nuisance parameter, $\theta$ is the unknown parameter of interest, and
$\bar M_n(\theta)=\frac{1}{n}M(W_i,\theta)$.
They use matrix reformation to eliminate the nuisance parameter $\delta$. 

\begin{lemma}\label{lemmacs}
    Let $B$ and $C$ be conformable matrices, and $d$ be a conformable vector. There exists a matrix $A(B, C)$ and a vector $b(C, d)$ such that
    \begin{equation*}
        \{\delta : C\delta \geq B\mu - d\} \neq \emptyset \iff A\mu \leq b
    \end{equation*}
Furthermore, $A(B, C) = H(C)B$ and $b(C, d) = H(C)d$, where $H(C)$ is the matrix with rows formed by the vertices of the polyhedron $\{h \in \mathbb{R}^k : h \geq 0, C'h = 0, 1'h = 1\}$.
\end{lemma}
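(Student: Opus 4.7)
The plan is to derive the stated equivalence as a direct application of Farkas' lemma (a theorem of the alternative), followed by a finite-dimensional reduction to vertices of a polytope. For fixed $\mu$, the set $\{\delta : C\delta \geq B\mu - d\}$ is non-empty if and only if the system $(-C)\delta \leq d - B\mu$ has a solution $\delta$. By the standard form of Farkas' lemma, this holds if and only if for every $h \geq 0$ satisfying $(-C)'h = 0$ (equivalently $C'h = 0$), one has $h'(d - B\mu) \geq 0$, i.e., $h'B\mu \leq h'd$. This is the key reformulation.

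Next, I would reduce this infinite family of inequalities (indexed by the polyhedral cone $K = \{h \geq 0 : C'h = 0\}$) to a finite family. Since both $h \mapsto h'B\mu$ and $h \mapsto h'd$ are linear in $h$, the implication $h'B\mu \leq h'd$ for all $h \in K$ is invariant under positive scaling of $h$, so it is equivalent to the same implication over the normalized slice
\begin{equation*}
P(C) = \{h \in \mathbb{R}^k : h \geq 0,\ C'h = 0,\ 1'h = 1\}.
\end{equation*}
The set $P(C)$ lies in the simplex and is therefore a bounded polytope; by the Minkowski--Weyl theorem it equals the convex hull of its finitely many vertices. By linearity, a linear inequality holds on a polytope if and only if it holds at its vertices. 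Collecting those vertices as the rows of $H(C)$ gives the equivalent finite system $H(C)B\mu \leq H(C)d$, so the choice $A(B,C) = H(C)B$ and $b(C,d) = H(C)d$ satisfies the claim.

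The main subtlety I expect is the degenerate case in which $P(C)$ is empty. Here the infinite-alternative condition is vacuously satisfied, so $\{\delta : C\delta \geq B\mu - d\}$ is non-empty for every $\mu$; correspondingly $H(C)$ has no rows and $A\mu \leq b$ is trivially true, so the equivalence still holds. (One can justify this either by interpreting the statement vacuously, or via Gordan's theorem, which gives a $\delta$ with $C\delta > 0$ strictly and hence solvability after scaling.) A secondary point worth verifying is that using only the vertices of $P(C)$, rather than all of its points, is justified because the implicated inequality is linear in $h$; this is where the citation to Theorem 4.2 of \cite{kohler1967projections} is doing work, and I would invoke it (or Minkowski--Weyl together with a convexity argument) rather than reprove vertex enumeration from scratch.
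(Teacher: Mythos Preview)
Your proof is correct. The paper does not actually supply a proof of this lemma: it is stated as a summary of the elimination procedure of \cite{cox2023simple}, which in turn rests on Theorem~4.2 of \cite{kohler1967projections}, and no argument is given in the paper or its appendix. Your route---Farkas' lemma to obtain the alternative ``$h'B\mu \leq h'd$ for all $h\geq 0$ with $C'h=0$,'' followed by normalization to the slice $P(C)$ and reduction to vertices via Minkowski--Weyl---is exactly the standard argument behind the cited projection result, so you are effectively reconstructing what those references contain rather than diverging from the paper. Your handling of the degenerate case $P(C)=\emptyset$ is also correct and worth keeping.
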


Following Lemma \ref{lemmacs}, the moment inequality system is equivalent to
\begin{equation*}
    A\mathbb E_F[\bar{M}_n(\theta)]\leq b
\end{equation*}
with $A=A(B,C)$ and $b=b(C,d)$. Let $\hat{\Sigma}_n(\theta)$ be an estimator of the covariance matrix of the moments ${\Sigma}_n(\theta)$, where
\begin{equation*}
    {\Sigma}_n(\theta)=\text{Var}(\sqrt{n}\bar{M}_n(\theta))
\end{equation*}
Then they build a test statistics 
\begin{equation*}
T_n(\theta) = \min_{\mu : A\mu \leq b }
n (\bar{M}_n(\theta) - \mu)' \hat{\Sigma}_n(\theta)^{-1}(\bar{M}_n(\theta) - \mu)
\end{equation*}
Let $\hat{\mu}$ denote the solution to the minimization question. 
Let $a_j'$ denote the $j$th row of $A$ and $b_j$ denote the $j$th element of $b$ for $j=1, 2, \dots, d_A$. Let
$$
\hat{J}=\{ j\in\{ 1,2,...,d_A \}: a_j' \hat{\mu}= b_j  \}
$$
which is the indices set for active inequalities.  Write $A_J$ as the submatrix of $A$ formed by the rows of $A$ corresponding to the elements in $J$, and let $\text{rk}(A_J)$ denote the rank of $A_J$. Define $\hat{r} = \text{rk}(A_{\hat{J}})$.

If $\hat{r}=1$, without loss of generality, suppose the first inequality is active and satisfies $a_1\neq 0$. Then for each $j=2,...,d_A$, let
\begin{equation*}
  \hat{\tau}_j= \left\{
  \begin{array}{lcl}
  \frac{ \sqrt{n} \| a_1\|_{\hat{\Sigma}_n} ( b_j-a'_j \hat{\mu} ) }
  { \| a_1\|_{\hat{\Sigma}_n} \| a_j\|_{\hat{\Sigma}_n} -
     a'_1 \hat{\Sigma}_n a_j   }
  & & \text{if }
       \| a_1\|_{\hat{\Sigma}_n} \| a_j\|_{\hat{\Sigma}_n} \neq
                   a'_1 \hat{\Sigma}_n a_j    \\
  \infty & & \text{otherwise }
  \end{array}
   \right.
\end{equation*}
where $\| a \|_{\Sigma} = (a' \Sigma a)^{1/2}$. Let
\begin{equation*}
  \hat{\tau}=\inf_{j\in \{ 2,...,d_A \}} \hat{\tau_j}.
\end{equation*}
Then define
\begin{equation*}
  \hat{\beta}=\left\{ \begin{array}{lcl}
  2\alpha\Phi(\hat{\tau}) & & \text{if } \hat{r}=1  \\
  \alpha & & \text{otherwise}
  \end{array} \right.
\end{equation*}
Then, the RCC test for $H_o:\theta=\theta_0$ is given  by
\begin{equation*}
  \phi_n^{\text{RCC}}(\theta,\alpha)=\mathds{1}\left\{ T_n(\theta)>
  \chi^2_{\hat{r},1-\hat{\beta}} \right\}
\end{equation*}

The following result is a lemma from \cite{cox2023simple}'s argument.

\begin{assumption}\label{assump2}

Given a sequence $\{ (F_n,\theta_n): F_n \in \mathcal{F}, \theta_n \in \Theta_0(F_n) \}_{i=1}^n$,
for every sequence $n_o$, there exists a further subsequence $n_q$ and a sequence of positive definite  matrices $\{ D_q (\theta_{n_q}) \}$ such that
\begin{equation*}
  \sqrt{n_q}D_q^{-1/2}( \bar{p}_{n_q}(\theta_{n_q})-E[\bar{p}_{n_q}(\theta_{n_q})] ) \xrightarrow{d} N(0,\Omega)
\end{equation*}
for some positive definite correlation matrix $\Omega$, and
\begin{equation*}
  \| D_q(\theta_{n_q})^{-1/2} \hat{\Sigma}_{n_q}(\theta_{n_q}) D_q(\theta_{n_q})^{-1/2} - \Omega  \| \xrightarrow{p} 0
\end{equation*}
\end{assumption}

This assumption is actually weak, as \cite{cox2023simple} argues, since it can be implied by four ordinary conditions: i.i.d. dataset, positive variance of moments, significant covariance matrix determinant, and bounded standardized moments. It elicits the following result:

\begin{lemma}
Suppose the Assumption \ref{assump2} holds for all sequences $\{ (F_n,\theta_n): F_n \in \mathcal{F}, \theta_n \in \Theta_0(F_n) \}_{i=1}^n$,
then
\begin{equation*}
  \limsup_{n\rightarrow \infty} \sup_{F\in \mathcal{F}} \sup_{\theta \in \Theta_0(F)} 
  E[ \phi_n^{RCC}(\theta,\alpha)  ] \leq \alpha
\end{equation*}
\end{lemma}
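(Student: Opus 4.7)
The plan is to mimic the subsequence argument from Cox and Shi (2023). Suppose for contradiction that the limsup on the left-hand side exceeds $\alpha$. Then there exists a sequence $(F_n, \theta_n)$ with $F_n \in \mathcal{F}$ and $\theta_n \in \Theta_0(F_n)$ along which $E[\phi_n^{RCC}(\theta_n,\alpha)]$ converges to a value strictly larger than $\alpha$. Apply Assumption \ref{assump2} to this sequence to extract a subsubsequence $n_q$ along which $\sqrt{n_q} D_q^{-1/2}(\bar{M}_{n_q}(\theta_{n_q}) - E_{F_{n_q}}[\bar M_{n_q}(\theta_{n_q})]) \xrightarrow{d} \xi \sim N(0,\Omega)$ and $D_q^{-1/2}\hat\Sigma_{n_q}(\theta_{n_q}) D_q^{-1/2} \xrightarrow{p} \Omega$. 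Since $\theta_{n_q} \in \Theta_0(F_{n_q})$, the population mean $\mu_q^\star = E_{F_{n_q}}[\bar M_{n_q}(\theta_{n_q})]$ satisfies $A\mu_q^\star \leq b$, which provides a feasible point for the minimization defining $T_{n_q}$.

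Next I would rewrite $T_{n_q}(\theta_{n_q})$ in the standardized form $\min_{\nu \in \mathcal{P}_q}(\zeta_q - \nu)' \hat\Omega_q^{-1}(\zeta_q - \nu)$, where $\zeta_q = \sqrt{n_q} D_q^{-1/2}(\bar M_{n_q} - \mu_q^\star)$, $\hat\Omega_q = D_q^{-1/2}\hat\Sigma_{n_q} D_q^{-1/2}$, and $\mathcal{P}_q = \{\nu : A D_q^{1/2}\nu \leq \sqrt{n_q}(b - A\mu_q^\star)\}$ is an expanding polyhedron whose nonnegative slack $\sqrt{n_q}(b - A\mu_q^\star)$ may diverge for slack constraints and stay bounded for tight ones. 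By passing to a further subsequence I may assume that for each row $j$ the slack $\sqrt{n_q}(b_j - a_j'\mu_q^\star)$ has a limit in $[0,\infty]$; the indices with zero limit define the asymptotic active set $J^\infty$. A standard argmin/KKT continuity argument then yields $T_{n_q}(\theta_{n_q}) \xrightarrow{d} T_\infty := \min_{\nu: a_j'\nu \leq \gamma_j, j\in J^\infty}(\xi - \nu)'\Omega^{-1}(\xi - \nu)$, where $\gamma_j \in [0,\infty)$ are the limiting slacks for the asymptotically active constraints, and slack constraints play no role in the limit.

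The core distributional step is to bound $T_\infty$. When the asymptotic rank satisfies $r^\infty \geq 2$, standard projection arguments show $T_\infty$ is stochastically dominated by $\chi^2_{r^\infty}$, so comparison with $\chi^2_{r^\infty, 1-\alpha}$ gives asymptotic rejection rate at most $\alpha$. The more delicate case is $r^\infty = 1$: conditional on the event that only one constraint is active in the limit, the quadratic program reduces to a one-dimensional truncated Gaussian projection with mass at zero, yielding $P(T_\infty > x) \leq 2(1 - \Phi(\sqrt{x}))$, i.e., a half-$\chi^2_1$ bound, which exactly motivates the threshold $\chi^2_{1, 1-2\alpha}$. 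The refinement $\hat\beta = 2\alpha\Phi(\hat\tau)$ interpolates between this half-$\chi^2_1$ critical value (when other constraints have large slack, $\hat\tau \to \infty$, $\hat\beta \to 2\alpha$) and the full $\chi^2_{1, 1-\alpha}$ (when some other constraint is nearly active, $\hat\tau \to 0$, $\hat\beta \to \alpha$). Verifying that $P(T_\infty > \chi^2_{1, 1-\hat\beta^\infty}) \leq \alpha$ requires carefully quantifying how close-to-binding inequalities inflate the tail of $T_\infty$ and showing that $\hat\tau$ tracks this inflation consistently; this is the main technical hurdle.

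Finally, combining these bounds with the continuous mapping theorem for the (non-smooth) critical value map $\alpha \mapsto \chi^2_{\hat r, 1 - \hat\beta}$, and using consistency of $\hat r$ and $\hat\beta$ along the subsequence (after yet another extraction to handle the discreteness of $\hat r$), yields $\limsup_q E[\phi_{n_q}^{RCC}(\theta_{n_q},\alpha)] \leq \alpha$, contradicting the choice of the original sequence. The principal obstacle is the discreteness of $\hat J$ and $\hat r$: the active set need not stabilize along the original sequence, so the argument must be executed subsequence-by-subsequence and the refinement $\hat\beta$ must be shown to dominate the correct quantile in every limit configuration.
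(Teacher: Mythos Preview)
The paper does not prove this lemma; it is quoted from \cite{cox2023simple} and the sentence after it reads ``This lemma has illustrated the validity of \cite{cox2023simple}'s method.'' The relevant comparison is therefore with the paper's proof of Theorem~\ref{th2}, which carries out the same argument in the presence of an estimated nuisance parameter and exposes the Cox--Shi proof structure.

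Your subsequence extraction and standardization match that proof essentially line for line. The divergence is at the distributional step. You propose to (i) establish weak convergence $T_{n_q}\xrightarrow{d}T_\infty$ via an argmin/KKT continuity argument, (ii) separately establish convergence of $\hat r$ and $\hat\beta$, and (iii) bound the limiting rejection probability by case analysis on $r^\infty$ (stochastic dominance by $\chi^2_{r^\infty}$ for $r^\infty\ge 2$, the half-$\chi^2_1$ tail for $r^\infty=1$, interpolated by the refinement). The paper's route, following Cox--Shi, bypasses (i)--(iii) entirely: it invokes a \emph{pointwise} indicator inequality
\[
\mathds{1}\bigl\{\|x_q - t^*_q(x_q)\|^2 \le \chi^2_{r^q(x_q),\,1-\beta^q(x_q)}\bigr\}
\;\ge\;
\mathds{1}\bigl\{\|x_\infty - t^*_\infty(x_\infty)\|^2 \le \chi^2_{r^\infty(x_\infty),\,1-\beta^\infty(x_\infty)}\bigr\},
\]
takes expectations with bounded convergence, and then applies the \emph{finite-sample} result (Theorem~\ref{th1} here, with $n=1$ and $\bar p_n=X\sim N(0,I)$) to the limiting Gaussian problem to get the $1-\alpha$ bound. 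This absorbs your entire case analysis on $r^\infty$ into an already-proved finite-sample theorem and, crucially, never requires $\hat r$ or $\hat\beta$ to converge.

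Your plan is not wrong in spirit, but the two places you yourself flag as hurdles---verifying that the refinement $\hat\beta$ dominates the correct quantile in every limit configuration, and handling the discreteness of $\hat r$---are precisely what the pointwise-indicator-plus-finite-sample device is designed to avoid. In your framework you would need joint convergence of $(T_{n_q},\hat r,\hat\beta)$ and a continuous-mapping argument through a discontinuous critical-value map, which does not follow from further subsequence extraction alone (the random $\hat r$ need not settle even along a subsequence). Without the indicator inequality you end up re-deriving the content of the finite-sample theorem inside the asymptotic argument; with it, the proof is two lines after standardization.
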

This lemma has illustrated the validity of \cite{cox2023simple}'s method. 
However, their method cannot be directly applied to our structure since when $N(W, \delta)$ is no longer a single parameter.


\subsection{Proposed inference methods}\label{sec:method}
To address this issue, we propose a two-step estimation procedure. The first step is an estimation of the nuisance parameter, where the nuisance parameter is assumed to be point identified. Suppose the first stage estimation of the nuisance parameter $\delta$ is $\hat{\delta}$. We will make further assumptions on the properties of $\hat{\delta}$ later to ensure the validity of our method. 
The method in the first stage estimation can vary across realistic conditions. 

In the second step, we can follow the approach of \cite{cox2023simple} but make some adjustments. First, denote the sample average of $X$ as $\bar{X}_n = n^{-1}\sum_{i=1}^n X_i$, 
where $X$ can be any function of random variables. Write 
\[
p(W, \theta, \delta) = \begin{bmatrix} M(W, \theta) \\ N(W,\theta,\delta) \end{bmatrix}
\]
Denote the sample mean by $\bar M(\theta) =  \frac{1}{n}\sum_{i=1}^n M(W_i,\theta)$, $\bar N(\theta,\hat\delta) =  \frac{1}{n}\sum_{i=1}^n N(W_i,\theta,\hat\delta)$, and $\bar{p}_n(\theta,\hat\delta) = \frac{1}{n}\sum_{i=1}^n p(W_i,\theta,\hat\delta)$. 
Let $\hat{\Sigma}_n(\theta, \hat{\delta})$ be an invertible estimator of the conditional variance $\text{Var}(\sqrt{n}\bar{p}_n(\theta, \hat{\delta}) | \hat{\delta})$. 
Construct the following test statistic:
\begin{equation*}
T_n(\theta, \hat{\delta}) = \min_{t, \eta : B(\theta)t - C(\theta)\eta \leq \rho(\theta)}
n (\bar{M}(\theta) - t, \bar{N}(\theta, \hat{\delta}) - \eta)' \hat{\Sigma}_n(\theta, \hat{\delta})^{-1}(\bar{M}(\theta) - t, \bar{N}(\theta, \hat{\delta}) - \eta).
\end{equation*}

Write 
$\kappa =[t,\eta]'$, 
then the constraint $B(\theta)t - C(\theta)\eta \leq \rho(\theta)$ can be written as
\begin{equation*}
[B(\theta), -C(\theta)] \kappa \leq \rho(\theta).
\end{equation*}
Let $A(\theta) = [B(\theta), -C(\theta)]$, the test statistic then becomes
\begin{equation}\label{teststat}
T_n(\theta, \hat{\delta}) = \min_{\kappa : A(\theta) \kappa \leq \rho(\theta)}
n (\bar{p}(\theta, \hat{\delta}) - \kappa)' \hat{\Sigma}_n(\theta, \hat{\delta})^{-1}(\bar{p}(\theta, \hat{\delta}) - \kappa).
\end{equation}

Let $\hat{\kappa}(\theta, \hat{\delta})$ be the solution to the minimization problem mentioned previously. Let $a_l'(\theta)$ denote the $l$th row of $A(\theta)$ and $d_l(\theta)$ denote the $l$th element of $\rho(\theta)$ for $l=1, 2, \dots, d_A$, where $d_A$ is the dimension of rows of $A(\theta)$. Define 
$$
\hat{J}(\theta,\hat{\delta})=\{ l\in\{ 1,2,...,d_A \}: a_l'(\theta) \hat{\kappa}(\theta,\hat{\delta})= d_l(\theta)  \}
$$
to collect the indices associated with the restrictions that hold with equality. Write $A_J$ as the submatrix of $A$ formed by the rows of $A$ corresponding to the elements in $J$, and let $\text{rk}(A_J)$ denote the rank of $A_J$. Define $\hat{r}(\theta, \hat{\delta}) = \text{rk}(A_{\hat{J}(\theta, \hat{\delta})}(\theta))$.

Conditional on the set of active inequalities, we follow the refined chi-squared  test as proposed by \cite{cox2023simple}. When $\hat{r} \neq 1$, the critical value of the RCC test statistic is set to be the $\chi^2$-distribution with a degree of freedom $\hat{r}$ at the quantile $1-\alpha$, that is, $\chi^2_{\hat{r}, 1-\alpha}$, where $\alpha$ is the pre-supposed significant level. When $\hat{r} = 1$, this critical value can be too conservative, as the test statistic tends to mass at zero. To address this issue, the RCC test employs a quantile of $1-\beta_n(\theta, \hat{\delta})$ instead of $1-\alpha$, which depends on how far from being active these inequalities are and would restore the size of the test.

Without loss of generality, suppose the first inequality is active. To measure the distance from being active for other inequalities, define $z_j(\theta, \hat{\delta})$, $j=2, 3, \dots, d_M+d_N$, as
\begin{equation*}
  z_j(\theta,\hat{\delta})= \left\{
  \begin{array}{lcl}
  \frac{ \sqrt{n} \| a_1\|_{\hat{\Sigma}_n} ( d_j-a'_j \hat{\kappa} ) }
  { \| a_1\|_{\hat{\Sigma}_n} \| a_j\|_{\hat{\Sigma}_n} -
     a'_1 \hat{\Sigma}_n a_j   }
  & & \text{if }
       \| a_1\|_{\hat{\Sigma}_n} \| a_j\|_{\hat{\Sigma}_n} \neq
                   a'_1 \hat{\Sigma}_n a_j    \\
  \infty & & \text{otherwise }
  \end{array}
   \right.
\end{equation*}
where $\| a \|_{\Sigma} = (a' \Sigma a)^{1/2}$ in definition. $z_j$ is equal to zero when the $j$th inequality is active, and positive when it is inactive. Then, let
\begin{equation*}
  z(\theta,\hat{\delta})=\inf_{j\in \{ 2,...,d_M+d_N \}} z_j(\theta,\hat{\delta}).
\end{equation*}

Define
\begin{equation*}
  \beta_n(\theta,\hat{\delta})=\left\{ \begin{array}{lcl}
  2\alpha\Phi(z(\theta,\hat{\delta})) & & \text{if } \hat{r}(\theta,\hat{\delta})=1  \\
  \alpha & & \text{otherwise}
  \end{array} \right.
\end{equation*}
where $\Phi$ is the standard normal cumulative distribution function. 
Finally, the critical value of the RCC test is the $\chi^2$-distribution with a degree of freedom $\hat{r}$ at the quantile $1-\beta_n(\theta, \hat{\delta})$. The test for $H_0: \theta_0 = \theta$ is represented as
\begin{equation*}
  \phi_n^{\text{RCC}}(\theta,\hat{\delta},\alpha)=\mathds{1}\left\{ T_n(\theta,\hat{\delta})>
  \chi^2_{\hat{r},1-\beta_n(\theta,\hat{\delta})} \right\}
\end{equation*}
Therefore, a confidence interval for $\theta$ under the significant level $\alpha$ can be built as
\begin{align}
	\{ \theta \in \Theta: \phi_n^{RCC}(\theta,\hat{\delta},\alpha)=0 \}.
\end{align}
after plugging in the first stage estimator $\hat{\delta}$.


\section{Asymptotics and Finite-Sample Validity}\label{ch:asymptotics}

In large samples, to ensure the size of the test, we have to make some high-level assumptions on the first-stage estimator $\hat{\delta}$.
Since we assume point identification of $\delta$, for a given data generating process $F\in \mathcal{F}$ there is only one $\delta$ satisfying $\delta=\Delta_0(F)$. Write the first-stage estimator as $\hat{\delta} = \hat{\Delta}_0(F)$,
we post another assumption to motivate the theorem of the asymptotic properties.

\begin{assumption}\label{assump3}
Given a sequence $\{ (F_n,\theta_n,\delta_n,\hat{\delta}_n): F_n \in \mathcal{F}, \theta_n \in \Theta_0(F_n), \delta_n = \Delta_0(F_n), \hat{\delta}_n = \hat{\Delta}_0(F_n)  \}_{i=1}^n$,
for every subsequence $n_o$, there exists a further subsequence $n_q$ and a sequence of positive definite $(d_M+d_N)^2$ matrices $\{ D_q (\theta_{n_q},\delta_{n_q}) \}$ such that
\begin{equation*}
  \sqrt{n_q}D_q(\theta_{n_q},\delta_{n_q})^{-1/2} ( \bar{p}_{n_q}(\theta_{n_q},\hat{\delta}_{n_q})-E[\bar{p}_{n_q}(\theta_{n_q},\delta_{n_q})] ) \xrightarrow{d} N(0,\Omega)
\end{equation*}
for some positive definite correlation matrix $\Omega$, and
\begin{equation*}
  \| D_q(\theta_{n_q},\delta_{n_q})^{-1/2} \hat{\Sigma}_{n_q}(\theta_{n_q},\hat{\delta}_{n_q}) D_q(\theta_{n_q},\delta_{n_q})^{-1/2} - \Omega  \| \xrightarrow{p} 0
\end{equation*}
\end{assumption}

This assumption is a variant of Assumption \ref{assump2}. In large samples, we assume the estimator $\hat{\delta}$ can ensure the asymptotic normality of $\bar{p}$ and the convergence of $\hat{\Sigma}$.
If both $\theta$ and $\delta$ considered as parameters of interest and the first stage estimation is skipped, 
the assumption is directly implied by Assumption \ref{assump2}. Nevertheless, this assumption is more than the consistency of the first-stage estimator $\hat{\delta}$, as it involves the large sample performance of $\theta_n$, $\delta_n$, and $\hat{\delta}_n$ simultaneously. 
%
The first-stage estimator $\hat\delta$ could have an impact on computing the variance  $\hat{\Sigma}_n(\theta,\hat\delta)$. As exemplified later in Section \ref{sec:empirical}, we use an influence function to propagate the error into the second stage. We can only disregard this impact in the  case where $E[p(x,\theta,\delta)]$ is orthogonal to the first-stage estimator $\hat\delta$, which means the Jacobian matrix derived later in Equation \eqref{eq:jacob} is zero evaluated at $\delta=\hat\delta$, and thus the influence function does not contribute to the sample variance.

We have the following theorem illustrating that the test has correct asymptotic size in large samples.

\begin{theorem}\label{th2}
Suppose the assumption \ref{assump3} holds for all sequences $\{ (F_n,\theta_n,\delta_n,\hat{\delta}_n): F_n \in \mathcal{F}, \theta_n \in \Theta_0(F_n), \delta_n = \Delta_0(F_n), \hat{\delta}_n = \hat{\Delta}_0(F_n) \}_{i=1}^n $,
then
\begin{equation*}
  \limsup_{n\rightarrow \infty} \sup_{F\in \mathcal{F}} \sup_{\theta \in \Theta_0(F)}
  E_F[ \phi_n^{RCC}(\theta,\hat{\delta},\alpha)  ] \leq \alpha
\end{equation*}
\end{theorem}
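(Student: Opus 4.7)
The plan is to reduce the statement to the lemma already quoted from Cox and Shi (2023), by viewing $\bar{p}_n(\theta,\hat\delta)$ as the effective moment vector and treating $A(\theta)=[B(\theta),-C(\theta)]$ and $\rho(\theta)$ as the effective constraint matrix and vector in their framework. The statistic in \eqref{teststat} is then formally the same Mahalanobis-type statistic they analyze, and the critical-value construction via $\hat r$, $z$, and $\beta_n$ is identical to theirs, so the only genuinely new issue is handling the plug-in of $\hat\delta$ for $\delta$.

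First I would reduce the uniform claim to a subsequence argument in the standard Andrews--Guggenberger style: pick any sequence $\{(F_n,\theta_n)\}$ with $\theta_n\in\Theta_0(F_n)$ whose rejection probabilities approach the limsup, set $\delta_n=\Delta_0(F_n)$ and $\hat\delta_n=\hat\Delta_0(F_n)$, and apply Assumption~\ref{assump3} to pass to a further subsequence $n_q$ equipped with positive-definite normalizers $D_q(\theta_{n_q},\delta_{n_q})$. Under the null the population moments satisfy $A(\theta_{n_q})\kappa^\star_q\le\rho(\theta_{n_q})$ with $\kappa^\star_q=E_{F_{n_q}}[p(W,\theta_{n_q},\delta_{n_q})]$, so the true mean of $\bar{p}_{n_q}(\theta_{n_q},\delta_{n_q})$ is a feasible point in the minimization defining $T_{n_q}$.

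Second, I would standardize by $D_q^{-1/2}$. The statistic in \eqref{teststat} is invariant under the change of variables $\kappa\mapsto D_q^{1/2}\tilde\kappa$, $A(\theta)\mapsto A(\theta)D_q^{1/2}$, $\hat\Sigma_n\mapsto D_q^{-1/2}\hat\Sigma_n D_q^{-1/2}$, so after standardization Assumption~\ref{assump3} delivers exactly the two ingredients that Cox and Shi use under their Assumption~\ref{assump2}: asymptotic normality of the recentered moment vector around a feasible point, and consistent estimation of the limiting correlation matrix $\Omega$. Their case analysis then applies verbatim: on $\{\hat r\ge 2\}$ the quantile $\chi^2_{\hat r,1-\alpha}$ dominates the limit distribution of the standardized $T_{n_q}$, while on $\{\hat r=1\}$ the refinement $\beta_n=2\alpha\Phi(z)$ corrects the size distortion that arises when a single half-space is active. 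Combining the two regimes gives $\limsup_q E[\phi_{n_q}^{\mathrm{RCC}}]\le\alpha$, and since every subsequence admits such a further subsequence, this upgrades to the uniform bound asserted in the theorem.

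The main obstacle, and the reason for stating Assumption~\ref{assump3} as a high-level condition rather than deriving it from primitives, is the first-stage plug-in: the empirical average $\bar p_n(\theta,\hat\delta)$ is recentered at $E[\bar p_n(\theta,\delta)]$, and the natural influence-function expansion $\sqrt{n}(\bar p_n(\theta,\hat\delta)-E[\bar p_n(\theta,\delta)]) = \sqrt{n}(\bar p_n(\theta,\delta)-E[\bar p_n(\theta,\delta)]) + \nabla_\delta E[p]\,\sqrt{n}(\hat\delta-\delta) + o_p(1)$ contributes an extra Gaussian term whose variance must be absorbed into $\hat\Sigma_n$. Assumption~\ref{assump3} packages this bookkeeping into the convergence of the standardized moment and variance estimator, so the analytical burden of the proof is effectively imported from Cox and Shi; the substantive work that remains is to verify, case by case (as the paper does in Section~\ref{sec:empirical}), that the expansion is valid and that $\hat\Sigma_n$ correctly incorporates the first-stage propagation.
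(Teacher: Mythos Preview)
Your proposal is correct and follows essentially the same route as the paper: reduce the uniform statement to a subsequence argument, apply Assumption~\ref{assump3} to obtain normalizers $D_q$ and the Gaussian limit, standardize the test statistic accordingly, and then import the Cox--Shi analysis to control size on the limit experiment. The only cosmetic difference is that the paper closes the argument by invoking Theorem~\ref{th1} (the finite-sample result) applied to the limiting Gaussian $X\sim N(0,I)$, rather than re-running the $\hat r=1$ versus $\hat r\ge 2$ case split you sketch; these are the same thing, since Theorem~\ref{th1} is itself a restatement of the Cox--Shi finite-sample bound.
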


Therefore, when the number of the observations $n$ approximates infinity, the confidence set constructed from test inversion will asymptotically cover the parameters of interest at the presumed significant level.

In finite samples, this method is valid under certain conditions. The following theorem states that
when sample mean is assumed to be normal and the estimation of covariance matrix is almost surely accurate,
the test exhibits proper size under finite sample.

\begin{theorem}\label{th1}
Under the following assumptions:
\begin{enumerate}
  \item Conditional on $\hat{\delta}$, there is $\sqrt{n}( \bar{p}(\theta,\hat{\delta})- E[\bar{p}(\theta,\hat{\delta})| \hat{\delta} ] )  \sim N(0,\Sigma_n(\theta,\hat{\delta}))$, where $\Sigma_n(\theta,\hat{\delta})$ is a positive definite and invertible matrix for all $\theta \in \Theta$ and $\hat{\delta} \in \Delta$.
  \item $\hat{\Sigma}_n(\theta,\hat{\delta})=\Sigma_n(\theta,\hat{\delta})$ a.s. for all $\theta \in \Theta$ and $\hat{\delta} \in \Delta$.
\end{enumerate}
Then, for any $\theta \in \Theta$ and $\hat{\delta} \in \Delta$, $E[\phi_n^{RCC}(\theta,\hat{\delta},\alpha) | \hat{\delta} ] \leq \alpha$. Equality holds when $A(\theta) E[\bar{p}_n(\theta,\hat{\delta}) | \hat{\delta} ]=\rho(\theta)$ and $A(\theta)\neq \textbf{0}$.
\end{theorem}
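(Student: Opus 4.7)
The plan is to condition on $\hat{\delta}$ throughout and reduce the statement to the finite-sample validity argument that underlies the RCC test of \cite{cox2023simple}. Fix $\theta \in \Theta$ and a realization of $\hat{\delta}$. Assumption~1 gives $\sqrt{n}\bar{p}_n(\theta,\hat{\delta})\mid\hat{\delta} \sim N(\sqrt{n}E[\bar{p}_n(\theta,\hat{\delta})\mid \hat{\delta}],\Sigma_n(\theta,\hat{\delta}))$ exactly (no asymptotic approximation), and Assumption~2 gives $\hat{\Sigma}_n(\theta,\hat{\delta})=\Sigma_n(\theta,\hat{\delta})$ almost surely. Thus, conditional on $\hat{\delta}$, the test statistic in Equation~\eqref{teststat} is exactly the $\Sigma_n^{-1}$-weighted squared distance from a Gaussian random vector with known covariance to the polyhedron $\{\kappa : A(\theta)\kappa \leq \rho(\theta)\}$.

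Next I would record the structural correspondence with the Cox--Shi setup: their quadruple $(\bar{M}_n(\theta),A,b,\hat{\Sigma}_n)$ becomes our $(\bar{p}_n(\theta,\hat{\delta}),A(\theta),\rho(\theta),\hat{\Sigma}_n(\theta,\hat{\delta}))$. The only difference is that in their treatment $A,b$ arise from Kohler's projection applied to eliminate a nuisance parameter, whereas our $A(\theta),\rho(\theta)$ are given directly. Crucially, their finite-sample validity argument never uses this specific Kohler geometry; it only uses that the constraint is polyhedral and the underlying vector is exactly Gaussian with known covariance. I would therefore invoke their argument verbatim conditional on $\hat{\delta}$: when $\hat{r}(\theta,\hat{\delta})\geq 2$, a standard KKT analysis of the projection shows $T_n$ is stochastically dominated by $\chi^2_{\hat{r}}$ under any mean satisfying $A(\theta)E[\bar{p}_n\mid\hat{\delta}]\leq \rho(\theta)$, so the critical value $\chi^2_{\hat{r},1-\alpha}$ gives conditional size at most $\alpha$. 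When $\hat{r}=1$, the refined $2\alpha\Phi(z(\theta,\hat{\delta}))$ calibration restores the size exactly as in Cox--Shi's refinement.

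I expect the main obstacle to be the $\hat{r}=1$ branch, since the critical value $\chi^2_{1,1-\beta_n(\theta,\hat{\delta})}$ depends on the data through $z(\theta,\hat{\delta})$. Here one must reproduce Cox--Shi's geometric argument: conditional on which single inequality is active and on the $\hat{\Sigma}_n$-angles $z_j$ to the remaining inequalities, the relevant tail of the truncated one-dimensional normal is bounded by $2\alpha\Phi(z)$ uniformly over the mean, and integrating out the slack directions recovers $\alpha$. Finally, the equality claim under $A(\theta)E[\bar{p}_n(\theta,\hat{\delta})\mid\hat{\delta}]=\rho(\theta)$ with $A(\theta)\neq\mathbf{0}$ follows because in both the $\hat{r}\geq 2$ and $\hat{r}=1$ branches the stochastic dominance becomes equality when the true mean lies on the full boundary — the Gaussian vector is centered exactly at the apex of the polyhedral cone, so the truncated chi-squared distribution coincides with $\chi^2_{\hat{r}}$ and the refined tail bound binds.
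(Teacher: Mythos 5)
Your proposal is correct and follows essentially the same route as the paper's own proof: both condition on $\hat{\delta}$, use Assumptions 1--2 to obtain an exactly Gaussian vector with known covariance, observe that the test statistic is then a weighted squared distance to the polyhedron $\{\kappa : A(\theta)\kappa \le \rho(\theta)\}$, and defer to the finite-sample validity argument of Cox and Shi (2023) — the paper via their $K_J$ partition and Appendix A, you via the equivalent stochastic-dominance/refinement reading of that same argument. Your observation that the Cox--Shi argument uses only polyhedrality and exact Gaussianity (not the Kohler elimination structure) is precisely the point that makes the reduction legitimate, and your treatment of the equality case matches the paper's intent.
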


Theorem \ref{th1} demonstrates the validity of the proposed test under finite sample conditions. Specifically, it ensures that the test maintains the correct size, assuming normality of the sample mean and accurate estimation of the covariance matrix.

\section{Empirical Application}\label{sec:empirical}

This section continues Example \ref{example1} by applying the methodology to the empirical context of the U.S. commercial vehicle market using the dataset from \cite{wollmann2018trucks}. 
The objective is to provide empirical evidence on the relative magnitude of entry and exit costs across different vehicle attributes in the U.S. commercial truck market.


The dataset records transaction data and product characteristics for each brand and model within a year. The data spans from 1992 to 2012, covering sales information for every type of vehicle product within all mainstream brands.

The characteristics in the dataset include Gross Vehicle Weight Rating, Cab-Over-Engine Indicator, Compact-Front-End Indicator, and Long-Cap Indicator. Gross Vehicle Weight Rating measures the legal maximum load a vehicle can carry and is a key specification used by both regulators and manufacturers to classify vehicles. It also reflects the intended usage of the vehicle, with higher ratings typically indicating commercial applications. 
The Cab-Over-Engine Indicator captures whether the vehicle has a cab-over-engine design, where the driver sits directly above the engine and front axle, with improving maneuverability and visibility but reducing comfort.
The Compact-Front-End Indicator identifies vehicles with a shortened front hood that balances the benefits of maneuverability and space while accommodating limited engine capacity. Lastly, the Long-Cap Indicator denotes whether a vehicle is equipped with an extended cab, which offers additional interior space and comfort, especially suited for long-distance or demanding operations.

In our analysis, we treat each brand as a firm and define two vehicles as being within the same type if their characteristics are identical. Consequently, all vehicles are classified into 30 distinct types.

In the first stage of our estimation, we recover the unobserved mean utilities and estimate the demand-side parameters using a procedure inspired by \cite{berry1995automobile} and \cite{nevo2001measuring}. 
We begin with the utility specification for consumer $i$ choosing product $j$:
\begin{equation} \label{eq:utility}
  U(p_j,x_j,\xi_j,\varepsilon_{ij};\delta)
     = x_j \beta - \alpha p_j + \xi_j + \varepsilon_{ij},
\end{equation}
where $x_j$, $p_j$, $\xi_j$, and $\varepsilon_{ij}$ are defined in Example \ref{example1}. $\delta=(\beta,\alpha)$  are random coefficients that capture consumer heterogeneity, and are assumed to follow a joint distribution $F_{\beta,\alpha}(\cdot)$. 
We denote it by $U_{ij}$ for simplicity.

Given the assumption on $\varepsilon_{ij}$, the predicted market share of product $j$ takes the multinomial logit form:
\begin{equation} \label{eq:market_share}
    s_j = \int \frac{\exp\left(x_j \beta - \alpha p_j + \xi_j\right)}{1+\sum_{k=1}^J \exp\left(x_k \beta - \alpha p_k + \xi_k\right)} dF_{\beta,\alpha},
\end{equation}
where the integration is with respect to the joint distribution $F_{\beta,\alpha}$. Because the integral in Equation~\eqref{eq:market_share} does not yield a closed-form expression, we approximate it using Monte Carlo simulation. That is, for $R$ independent draws $(\beta_r,\alpha_r)$, the simulated market share is computed as
\begin{equation}\label{eq:simulationdraw}
    s_j \approx \frac{1}{R} \sum_{r=1}^{R} \frac{\exp\left(x_j \beta_r - \alpha_r p_j + \xi_j\right)}{1+\sum_{k=1}^J \exp\left(x_k \beta_r - \alpha_r p_k + \xi_k\right)}.
\end{equation}

Write the mean utility for product $j$ as
\begin{equation}\label{eq:zeta}
    \zeta_j \equiv x_j \beta - \alpha p_j + \xi_j
\end{equation}
To match the predicted market shares with the observed ones, we recover the vector $\zeta = (\zeta_1, \ldots, \zeta_J)$ using a fixed-point iteration. Let $\hat{s}_j$ denote the observed market share for product $j$. We start with an initial guess 
$
\zeta_j^{(0)} = \ln(\hat{s}_j)
$, 
and update iteratively as follows:
\begin{equation}
    \zeta_j^{(t+1)} = \zeta_j^{(t)} + \ln(\hat{s}_j) - \ln\big(s_j( \zeta^{(t)})\big),
\end{equation}
until the change in $\zeta$ is below a specified tolerance. 

A major econometric challenge in estimating demand is the potential endogeneity of prices $p_j$, which may be correlated with unobserved product quality $\xi_j$. To address this, we use an instrumental variable, $z_j$, which is assumed to satisfy
\[
E[z_j \xi_j] = 0.
\]
Based on our model, the GMM objective function is then
\begin{equation} \label{eq:gmm}
    Q(\delta) = \xi' Z W Z' \xi,
\end{equation}
where $Z$ is the instrument matrix, 
$\xi$ is the vector of $\xi_j$ after plugging the estimated mean utility $\zeta_j$ to Equation \eqref{eq:zeta}, 
and $W$ is a chosen positive-definite weighting matrix. 
\footnote{In our application, the parameter vector $\theta$ is estimated by minimizing $Q(\delta)$ using the BFGS algorithm.}

Note that the simulation draws in Equation \eqref{eq:simulationdraw} are not assumed to represent the true distribution of the random coefficients; rather, they are used as a computational device to approximate the integral in the market share model, which provides sufficient variation for the simulation while not overwhelming the signal in the data.

After the simulation step, the fixed-point iteration recovers the unobserved mean utilities $\zeta$ based on these simulated market shares. Although the initial draws for $\beta$ and $\alpha$ may be crude, they serve only as starting values in the simulation of market shares. In the subsequent GMM estimation step, these parameters are updated by minimizing the GMM objective function based on Equation \eqref{eq:gmm}. 
Thus, even if the initial simulation of $\beta$ and $\alpha$ uses a simple normal distribution, the iterative GMM procedure refines these parameters to achieve consistency with the observed market shares and the imposed moment conditions.


Table \ref{tab:results} presents the estimation results of the coefficients of product characteristics $\beta$ and the price sensitivity $\alpha$. 

\begin{table}[htbp]
    \centering
    \begin{tabular}{l r}
        \hline
        Parameter & Estimate \\
        \hline
        $\beta_1$ (Gross Vehicle Weight Rating) & $2.61\times10^6$ \\
        $\beta_2$ (Cab-Over-Engine Indicator)     & $7.70\times10^5$ \\
        $\beta_3$ (Compact-Front-End Indicator)        & $1.58\times10^5$ \\
        $\beta_4$ (Long-Cap Indicator)                & $6.26\times10^6$ \\
        \hline
        $\alpha$ (Price Sensitivity)              & $1.33\times10^5$ \\
        \hline
    \end{tabular}
    \caption{First-stage parameter estimates for the U.S. vehicle market}
    \label{tab:results}
\end{table}

The estimated coefficients in Table~\ref{tab:results} are large in magnitude, which is driven by the scaling of the explanatory variables and the nature of multinomial logit model. The signs of the parameters are consistent with economic theory. The positive estimates for the $\beta$ coefficients indicate that enhancements in product characteristics increase the mean utility, whereas the positive value of $\alpha$ implies that higher prices detract from consumer utility. Among all the features of vehicles, an extended cap is most valued by drivers. A cab over the engine is also cherished, while a compact front end type is less valued by drivers. 
This first-stage estimation thus produces demand-side parameters that are later used to construct the structural model for product entry and exit decisions in the second stage.

Recall that $p(W, \theta, \delta)$ is constructed as
$p(W,\theta, \delta) = [M(W,\theta)', N(W,\theta, \delta)']'$. 
In our framework, the  variance of the sample average $\bar{p}_n(\theta,\hat\delta)$ 
must account not only for the intrinsic sampling variation in $p(W,\theta,\delta)$, but also for the uncertainty arising from the estimation error in the nuisance parameter $\hat\delta$ from the first stage. 
To estimate the variance of $\bar{p}_n(\theta,\hat\delta)$, we apply the influence function method introduced by \cite{newey1994large} to capture how errors in the estimation of $\delta$ propagate into the second-stage estimation.

Let $P_{\delta}$ denote the Jacobian matrix:
\begin{equation*}
    P_{\delta} = \frac{\partial \mathbb{E}[p(W,\theta,\delta)]}{\partial \delta},
\end{equation*}
and denote the influence function by $\psi(W,\delta)$ to correct for the error in estimating $\delta$, defined as
\begin{equation*}
    \psi(W,\delta) = -G^{-1} g(W,\delta) 
\end{equation*}
with
\begin{eqnarray*}
    g(W, \delta) &=& z_j \cdot (\delta_j - x_j \beta + \alpha p_j) \\
    G &=& \mathbb{E}\left[\frac{\partial g(W,\delta)}{\partial \delta}\right]
\end{eqnarray*}

The corrected moment for each observation is then given by 
$p(W,\theta,\delta) + P_{\delta}\psi(W,\delta)$, 
and consequently, the variance of $\sqrt{n}\bar{p}_n(\theta,\hat\delta)$ can be estimated as
\begin{equation} \label{eq:variance_estimator}
\hat \Sigma_n(\theta,\hat{\delta}) = \frac{1}{n} \sum_{i=1}^n \left[p(W_i,\theta,\hat{\delta}) + P_{\hat\delta}\psi(W_i,\hat{\delta})\right] \left[p(W_i,\theta,\hat{\delta}) + P_{\hat\delta}\psi(W_i,\hat{\delta})\right]'.
\end{equation}
where
\begin{equation}\label{eq:jacob}
  P_{\hat\delta} = \frac{1}{n} \sum_{i=1}^n \left. \frac{\partial p(W_i, \theta, \delta)}{\partial \delta} \right|_{\delta=\hat\delta}
\end{equation}


The influence function correction compensates for the additional variability due to the estimation error in $\delta$, as well as for any potential endogeneity in prices via the instrumental variable $z_j$. 
The resulting variance estimator in Equation~\eqref{eq:variance_estimator} is then used to conduct robust inference on the structural parameters in our model.




Having obtained the first-stage estimates $\hat{\delta}$ and the sample variance of the moment function $\hat{\Sigma}_n(\theta,\hat{\delta})$, we proceed to the second-stage estimation of the sunk-cost parameters $\theta = (\lambda, \eta)$ using the RCC test procedure described in Section~\ref{sec:method}. Since the model imposes moment inequality restrictions, the parameters are only partially identified. As a result, unlike point-identified settings where a single estimate summarizes the inference, our procedure yields a high-dimensional confidence region that characterizes the identified set of $\theta$.

In practice, we evaluate the test statistic $T_n(\theta)$ over a finite grid in the $(\lambda,\eta)$ space and retain those values of $\theta$ for which the RCC test does not reject at level $\alpha$, that is, for which $\phi_n^{RCC}(\theta,\hat{\delta},\alpha) = 0$. These values constitute our estimate of the identified set.

To visualize this high-dimensional confidence region, we plot two-dimensional slices in the $(\lambda, \eta_1)$ plane while holding the remaining components $(\eta_2, \eta_3, \eta_4)$ fixed at representative values. Specifically, Figure~\ref{fig:conf_regions} presents four panels, each corresponding to a different choice of $(\eta_2, \eta_3, \eta_4)$. The shaded areas in these plots represent combinations of $(\lambda, \eta_1)$ that are included in the confidence region under the specified values of the other components.

\begin{figure}[htbp]
  \centering
  \begin{subfigure}[b]{0.48\textwidth}
    \includegraphics[width=\textwidth]{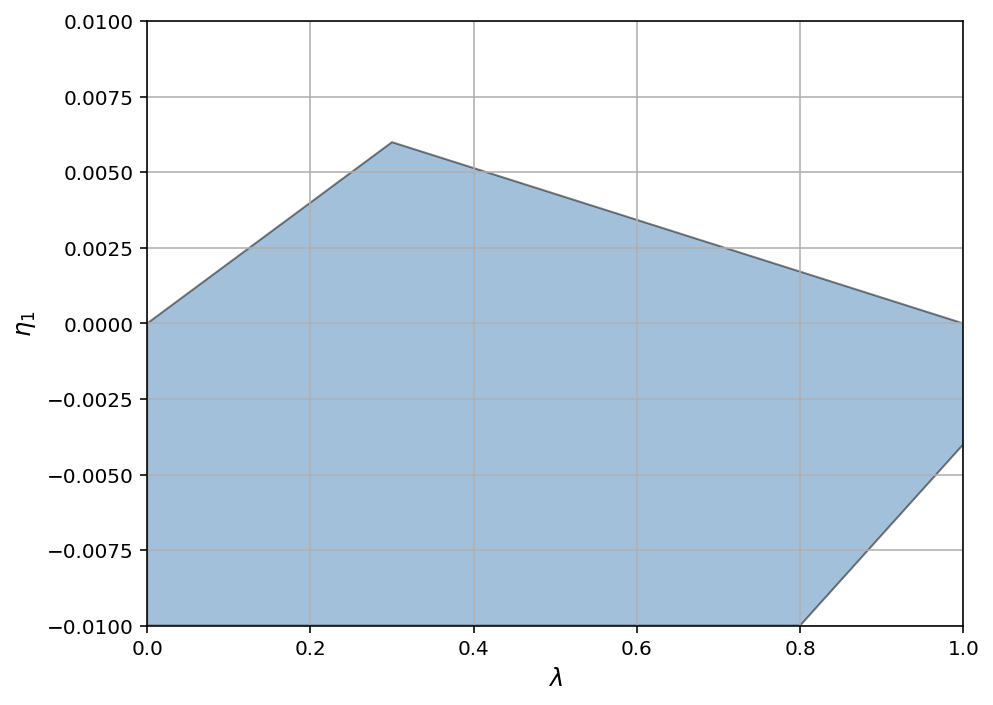}
    \caption{$(\eta_2,\eta_3,\eta_4)=(0.10,\,0.10,\,0.05)$}
    \label{fig:conf_a}
  \end{subfigure}
  \begin{subfigure}[b]{0.48\textwidth}
    \includegraphics[width=\textwidth]{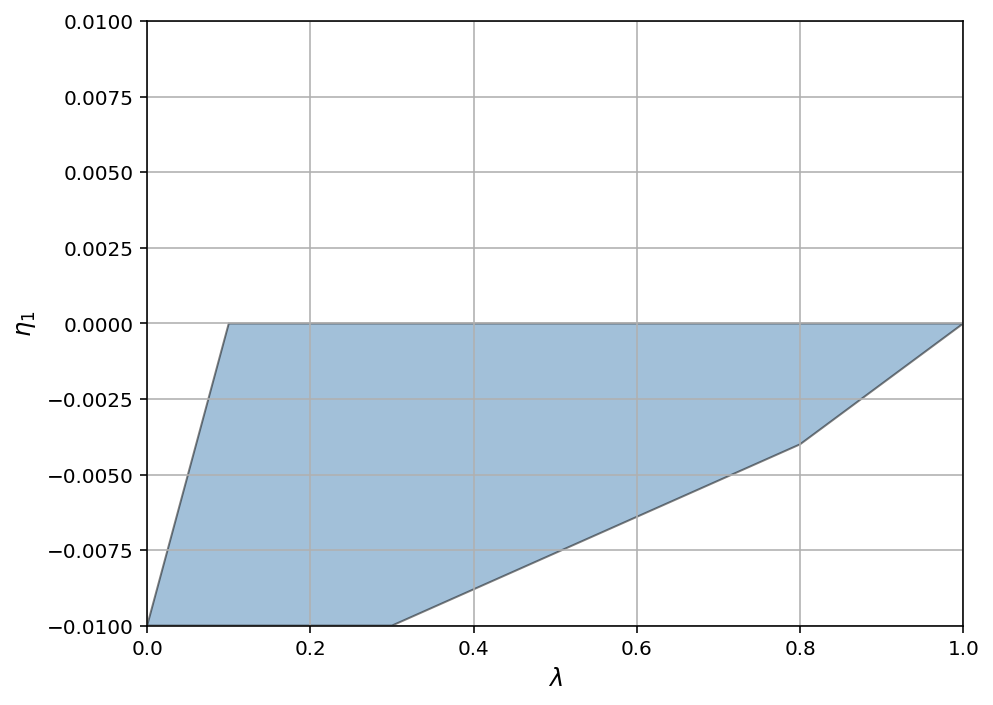}
    \caption{$(\eta_2,\eta_3,\eta_4)=(0.20,\,0.10,\,0.05)$}
    \label{fig:conf_b}
  \end{subfigure}
  
  \begin{subfigure}[b]{0.48\textwidth}
    \includegraphics[width=\textwidth]{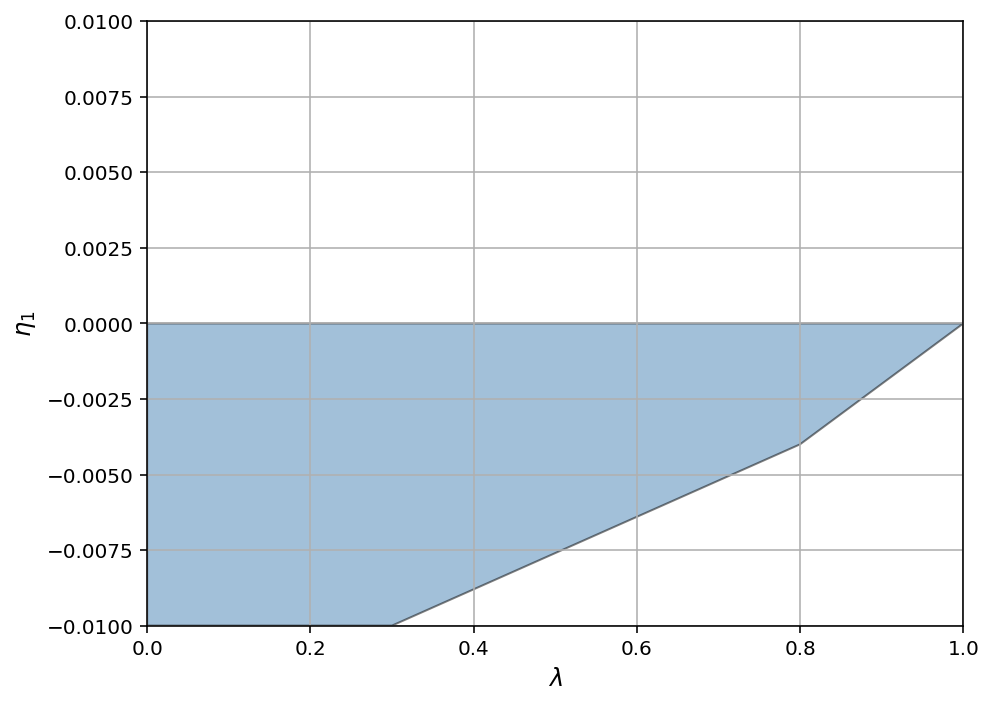}
    \caption{$(\eta_2,\eta_3,\eta_4)=(0.10,\,0.05,\,0.05)$}
    \label{fig:conf_c}
  \end{subfigure}
  \begin{subfigure}[b]{0.48\textwidth}
    \includegraphics[width=\textwidth]{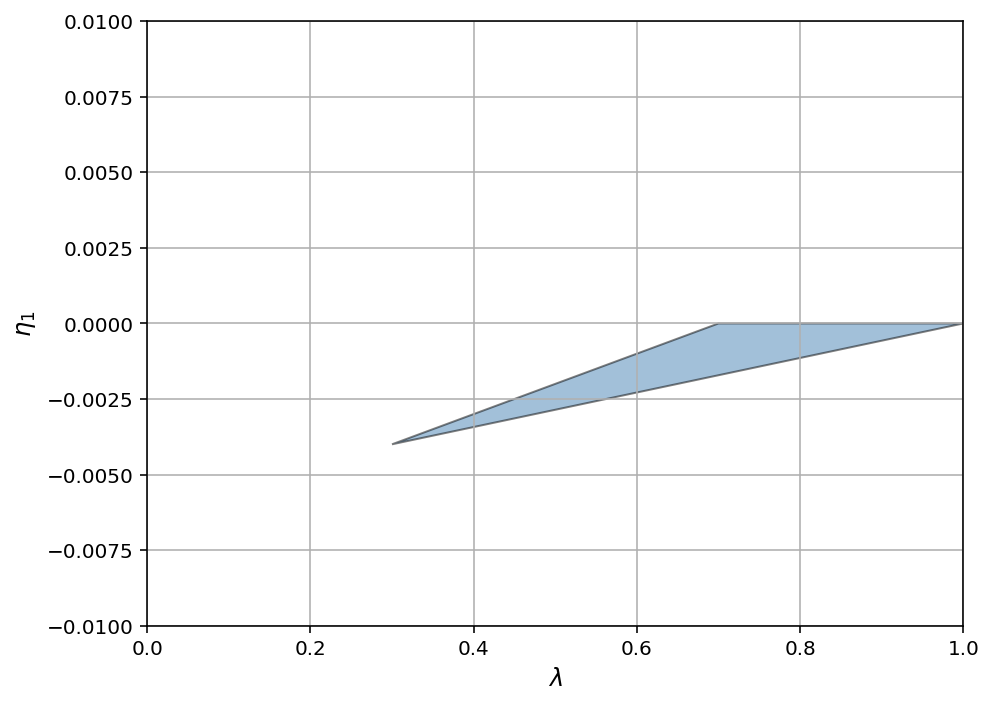}
    \caption{$(\eta_2,\eta_3,\eta_4)=(0.10,\,0.10,\,0.10)$}
    \label{fig:conf_d}
  \end{subfigure}
  \caption{Estimated 95\% confidence regions for $(\lambda,\eta_1)$ at fixed slices of $(\eta_2,\eta_3,\eta_4)$.  Each point in a panel represents a grid value of $(\lambda,\eta_1)$ that satisfies the RCC test.}
  \label{fig:conf_regions}
\end{figure}

We observe that the confidence region is most concentrated around $\lambda \approx 0.3$ and $\eta_1 \approx 0$, suggesting that the sunk costs exhibit low recoverability, and the gross vehicle weight does not have a significant influence on the sunk cost of establishing a production line. 
Moreover, 
changes in the remaining components $(\eta_2, \eta_3, \eta_4)$ affect the shape of the confidence region in the $(\lambda, \eta_1)$ plane. 
Starting from the baseline configuration $(\eta_2, \eta_3, \eta_4) = (0.10,\,0.10,\,0.05)$, increasing $\eta_2$, the sunk cost sensitivity to the cab-over-engine feature, 
or decreasing $\eta_3$, the sunk cost sensitivity to compact-front-end design, 
will slightly narrow the confidence region. This suggests that $(\eta_2, \eta_3, \eta_4) = (0.20,\,0.10,\,0.05)$ or $(0.10,\,0.05,\,0.05)$ is still consistent with the data and lie within the identified set. 
In contrast, increasing $\eta_4$, which captures the effect of long-cap design on sunk cost, 
results in a substantially reduced confidence region. These patterns indicate that the model imposes tighter restrictions on the influence of long-cap design characteristics on sunk costs, thus narrowing the plausible values of $(\lambda, \eta_1)$ that remain consistent with observed entry and exit decisions.

\section{Conclusion}\label{ch:conclusion}

In this paper, we have proposed an econometric method for estimating partially identified parameters in moment inequalities and separable nuisance parameters. Our method is applicable to a wide range of models, and we have demonstrated its effectiveness by applying it to two examples: the US vehicle market model from \cite{wollmann2018trucks} and the hospital referral model from \cite{ho2014hospital}. The former example focuses on the structural estimation of vehicle markets, while the latter examines the complex decision process of patient referrals to hospitals.

Our  method ensures correct asymptotic size in large samples. To achieve this, we have made high-level assumptions on the first-stage estimator $\hat{\delta}$. Under these assumptions, the confidence set constructed from test inversion will asymptotically cover the parameters of interest at the presumed significant level when the number of observations $n$ approaches infinity.

In conclusion, the proposed method offers a robust and versatile approach for estimating and testing models with moment inequalities and separable nuisance parameters. It is applicable to various types of models, as demonstrated by our examples, and exhibits valid finite-sample and asymptotic properties. Our findings contribute to the econometric literature and provide researchers with a powerful tool for analyzing complex economic models.

Future research directions could include exploring methods to relax the separability assumption, which may extend the applicability of the proposed estimation method to a broader range of models. Additionally, investigating alternative assumptions on the first-stage estimator could provide insights into the robustness and performance of the method under different estimation scenarios. This would further enhance the understanding of the practical implications of the method in various econometric applications.

\bibliographystyle{ecta}
\bibliography{citation.bib}

\appendix

\section{Proofs}

\begin{proof}[Proof of Theorem \ref{th1}]

In this proof, firstly assume $\Sigma_n(\theta,\hat{\delta})=nI_{d_M+d_N}$. This assumption does not lose any generality since if it's not the case, we can restore the diagonal variance case by pre-multiply $\bar{p}(\theta,\hat{\delta})$ by $n^{1/2}\Sigma_n(\theta,\hat{\delta})^{-1/2}$, and post-multiply $A$ by $n^{-1/2}\Sigma_n(\theta,\hat{\delta})^{1/2}$.

Fix $\theta$ and condition on $\hat{\delta}$, write $X=\bar{p}(\theta,\hat{\delta})\sim N(\mu,I_{d_M+d_N})$, where $\mu=E[\bar{p}(\theta,\hat{\delta})]$. Define a set $F=\{ \mu_0 \in R^{d_M+d_N} | A\mu_0 \leq \rho \}$, and note that the moment inequalities imply $\mu\in F$. After this simplification, the statistics in (\ref{teststat}) can be written as
\begin{equation*}
  T_n(\theta,\hat{\delta})= \| X-\hat{\mu} \|^2,
\end{equation*}
where $\hat{\mu}$ is the projection of $X$ onto $F$, which we denote by $P_F X$.
For convenience, we temporarily drop the argument of $(\theta,\hat{\delta})$.
The solution $\hat{\kappa}=[\hat{\kappa}_1,...,\hat{\kappa}_{d_M+d_N}]'$ to the minimization problem in (\ref{teststat}) is hence
\begin{equation*}
  \hat{\kappa}_j=\left\{
  \begin{array}{lcl}
    \frac{\| a_1 \| (d_j-a'_j\hat{\mu})}{\| a_1 \| \| a_j \| - a'_1 a_j} &  & \text{if } \| a_1 \| \| a_j \| \neq a'_1 a_j \\
    \infty &  & \text{otherwise}
  \end{array}
  \right.
\end{equation*}

Therefore, we define $K_J$, $J\in \{ 1,...,d_A \}$ within the same way as \cite{cox2023simple}, which forms a partition of $\mathbb{R}^{d_M+d_N}$, then we obtain
\begin{eqnarray*}
   & & \text{Pr}(\| X-P_CX \|^2> \chi^2_{\hat{r},1-\beta_n}) \\
   &=& \sum_{J \subseteq \{ 1,...,d_A \} } \text{Pr}( X\in K_J \text{ and } \| X-P_CX \|^2> \chi^2_{\hat{r},1-\beta_n} )
\end{eqnarray*}
which follows Lemma 3 by \cite{cox2023simple}. The rest of the proof are therefore the same as \cite{cox2023simple}'s appendix A. Therefore, the conclusion of the theorem follows.
\end{proof}

\begin{proof}[Proof of Theorem \ref{th2}]
Let $\{ F_n, \theta_n, \delta_n \}$ be an arbitrary sequence satisfying $F_n\in \mathcal{F}$, $\theta_n \in \Theta_0(F_n)$, and $\delta_n =\Delta_0(F)$ for all $n$.
Since $0 \leq \phi_n^{RCC}(\theta,\hat{\delta},\alpha) \leq 1$, and $[0,1]$ is naturally compact,
we can find a subsequence $\{ n_m \}$ of $\{n\}$ such that
$E_F[ \phi_{n_m}^{RCC}(\theta_{n_m},\hat{\delta}_{n_m},\alpha)  ]$ is convergent and satisfies
$$
\limsup_{n\rightarrow \infty} \sup_{F\in \mathcal{F}} \sup_{\theta \in \Theta_0(F)}
  E_F[ \phi_n^{RCC}(\theta,\hat{\delta},\alpha)  ]
  =
  \limsup_{m\rightarrow \infty}
  E_F[ \phi_{n_m}^{RCC}(\theta_{n_m},\hat{\delta}_{n_m},\alpha)  ]
$$

Therefore, a sufficient condition to prove the theorem is to establish that:
there exists a further subsequence $\{ n_q \}$ of $\{ n_m \}$ such that,
\begin{equation*}
  \lim_{q\rightarrow \infty} \text{Pr}_{F_{n_q}} (T_{n_q}(\theta_{n_q},\hat{\delta}_{n_q}) \leq \chi^2_{\hat{r},1-\hat{\beta}} ) \geq 1-\alpha
\end{equation*}
since the definition implies
$$
 E_F[ \phi_{n_q}^{RCC}(\theta_{n_q},\hat{\delta}_{n_q},\alpha)  ] =
\text{Pr}_{F_{n_q}} (T_{n_q}(\theta_{n_q},\hat{\delta}_{n_q}) \leq \chi^2_{\hat{r},1-\hat{\beta}} )
$$

By the assumption, for the arbitrary sequence $\{n_m\}$, there exists a further subsequence $\{n_q\}$, a sequence of positive definite matrices $D_q$ ($D_q$ should be a function with $(\theta,\delta)$, where we drop the argument there for notational simplicity), a random vector $Y$, and a positive definite correlation matrix $\Omega_0$, satisfying the consistency conditions:
\begin{eqnarray*}
  \sqrt{n_q}D_q^{-1/2}( \bar{p}_{n_q}(\theta_{n_q},\hat{\delta}_{n_q})-E[\bar{p}_{n_q}(\theta_{n_q},\delta_{n_q})] )
  &\xrightarrow{d}& Y \sim N(0,\Omega_0) \\
  D_q^{-1/2} \hat{\Sigma}_{n_q}(\theta_{n_q},\hat{\delta}_{n_q}) D_q^{-1/2}
  &\xrightarrow{p} & \Omega_0
\end{eqnarray*}

For simplicity, introduce the following notations:
\begin{eqnarray*}
  X &=& \Omega_0^{-1/2}Y \sim N(0,I) \\
  \hat{\Omega}_q(\theta_{n_q},\delta_{n_q},\hat{\delta}_{n_q})
                   &=& D_q^{-1/2}\hat{\Sigma}_{n_q}(\theta_{n_q},\hat{\delta}_{n_q})D_q^{-1/2} \\
  Y_q(\theta_{n_q},\delta_{n_q},\hat{\delta}_{n_q})
           &=& \sqrt{n_q}D_q^{-1/2}( \bar{p}_{n_q}(\theta_{n_q},\hat{\delta}_{n_q})-E[\bar{p}_{n_q}(\theta_{n_q},\delta_{n_q})] )  \\
  X_q(\theta_{n_q},\delta_{n_q},\hat{\delta}_{n_q})
        &=& \hat{\Omega}_q(\theta_{n_q},\delta_{n_q},\hat{\delta}_{n_q})^{-1/2} Y_q(\theta_{n_q},\delta_{n_q},\hat{\delta}_{n_q})
\end{eqnarray*}

Therefore, the assumption implies
\begin{eqnarray*}
  X_q &\rightarrow_d& X \sim N(0,I) \\
  \hat{\Omega}_q &\rightarrow_p& \Omega_0
\end{eqnarray*}

For each $q$, we define $\Lambda_q (\theta_{n_q},\delta_{n_q})$ as the same way as \cite{cox2023simple}, that is, a $d_A\times d_A$ diagonal matrix with positive entries on the diagonal such that each row of $\Lambda_q A(\theta_{n_q}) D_q^{1/2}$ is either zero or belongs to the unit circle. \cite{cox2023simple} argues such a $\Lambda_q$ always exists by taking the diagonal element to be the inverse of the magnitude of the corresponding row of $A(\theta_{n_q}) D_q$ if it is nonzero, and one otherwise.

Recall the test statistics for $n_q$
\begin{equation*}
  T_{n_q}(\theta_{n_q},\hat{\delta}_{n_q})=\min_{\kappa: A(\theta_{n_q}) \kappa \leq \rho(\theta_{n_q}) }
  n_q (\bar{p}_{n_q}(\theta_{n_q},\hat{\delta}_{n_q})-\kappa)' \hat{\Sigma}_{n_q}(\theta_{n_q},\hat{\delta}_{n_q})^{-1}(\bar{p}_{n_q}(\theta_{n_q},\hat{\delta}_{n_q})-\kappa)
\end{equation*}

Apply a change of variable $y=\sqrt{n_q}D_q^{-1/2}(\kappa-E[\bar{p}_{n_q}(\theta_{n_q},\delta_{n_q})])$.
Moreover, let
$g_q(\theta_{n_q},\delta_{n_q})=\sqrt{n_q}\Lambda_q( \rho(\theta_{n_q})-A(\theta_{n_q})E_{F_{n_q}}\bar{p}_{n_q}(\theta_{n_q},\delta_{n_q}) )$,
the test statistics can be written as
\begin{equation*}
  T_{n_q}(\theta_{n_q},\delta_{n_q},\hat{\delta}_{n_q})=\inf_{y:\Lambda_q A(\theta_{n_q})D_q^{1/2}y \leq g_q}
   (Y_q-y)'\hat{\Omega}_q^{-1}(Y_q-y)
\end{equation*}

Following \cite{cox2023simple}'s argument, there is
\begin{equation*}
  \mathds{1}\left\{ T_{n_q}(\theta_{n_q})\leq
  \chi^2_{\hat{r}(\theta_{n_q},\hat{\delta}_{n_q}),1-\hat{\beta}(\theta_{n_q},\hat{\delta}_{n_q})}
   \right\}  =
  \mathds{1}\left\{ \| x_q-t^*_q(x_q) \|^2 \leq
  \chi^2_{ r^q(x_q),1-\beta^q(x_q) }
   \right\}
\end{equation*}
where $\{x_q\}$ is a sequence of variables generated in their method, $\{t^*_q\}$, $\{r^q\}$, $\{\beta^q\}$ are sequences of functions derived in the same process. At the same time, \cite{cox2023simple} has proved the following expression
\begin{equation*}
  \mathds{1}\left\{ \| x_q-t^*_q(x_q) \|^2 \leq
  \chi^2_{ r^q(x_q),1-\beta^q(x_q) }
   \right\}
  \geq
  \mathds{1}\left\{ \| x_{\infty}-t^*_{\infty}(x_{\infty}) \|^2 \leq
  \chi^2_{ r^{\infty}(x_{\infty}),1-\beta^{\infty}(x_{\infty}) }
   \right\}
\end{equation*}
where $\{x_{\infty}\}$ is some variable and $\{t^*_{\infty}\}$, $\{r^{\infty}\}$, $\{\beta^{\infty}\}$ are some functions given by \cite{cox2023simple}.
Apply expectation to both side of this inequality, and by bounded convergence theorem, there is
\begin{equation*}
  \liminf_{q\rightarrow \infty} \text{Pr}_{F_{n_q}}(T_{n_q}(\theta_{n_q},\delta_{n_q})\leq \chi^2_{\hat{r}(\theta_{n_q},\delta_{n_q}),1-\hat{\beta}(\theta_{n_q},\delta_{n_q})}) \geq \text{Pr} (
  \| X-t^*_{\infty}(X,\Omega_0) \|^2 \leq \chi^2_{r^{\infty}(X,\Omega_0),1-\beta^{\infty}(X,\Omega_0)}        )
\end{equation*}

Moreover, Theorem \ref{th1} implies that
\begin{equation*}
  \text{Pr} (
  \| X-t^*_{\infty}(X,\Omega_0) \|^2 \leq \chi^2_{r^{\infty}(X,\Omega_0),1-\beta^{\infty}(X,\Omega_0)}        )  \geq 1-\alpha
\end{equation*}
under the situation $n=1$ and $\bar{p}_n(\theta,\delta,\hat{\delta})=X$. Therefore, these two inequalities imply the conclusion of Theorem \ref{th2}.
\end{proof}



\end{document}